\newtheorem{theorem}{\textbf{Theorem}}
\newtheorem{proposition}[theorem]{\textbf{Proposition}}
\newtheorem{remark}{\textbf{Remark}}
\newtheorem{proof}{\textbf{Proof}}
\begin{document}
\title{Delay Outage Probability of Multi-relay Selection for Mobile Relay Edge Computing Systems}
\author{Jie Liang, Zhiyong Chen, Cheng Li, Bin Xia, and Ning Liu\\
Cooperative Medianet Innovation Center, Shanghai Jiao Tong University,  Shanghai,  P. R. China\\
Email: \{liangjie, zhiyongchen, lichengg, bxia, ningliu\}@sjtu.edu.cn }
\maketitle
\begin{abstract}
In this paper, we deal with the problem of relay selection in mobile edge computing networks, where a source node transmits a computation-intensive task to a destination via the aid of multiple relay nodes. It differs from the traditional relay nodes in the way that each relay node is equipped with an edge computing server, thus each relay node can execute the received task and forwards the computed result to the destination. Accordingly, we define a delay outage probability to evaluate the impact of the relay computing ability on the link communications, and then propose a latency-best relay selection (LBRS) scheme that not only consider the communication capability, but also consider the computing ability. The performance of the proposed relay selection scheme with the traditional communication-only relay selection (CORS) and computing-only selection (CPORS) schemes in terms of the delay outage probability and the diversity order is analyzed, compared with other relay selection schemes. We show that the proposed LBRS scheme reduces to the traditional CORS scheme under the high signal-to-noise ratio (SNR) region. We further reveal that the diversity orders of both the proposed LBRS and CORS schemes are dependent on the computing ability of relay nodes.
\end{abstract}

\section{Introduction}
In the era of mobile internet, various mobile computing services are emerging, e.g., vehicular navigation, image recognition and augmented reality (AR). Such computation-intensive services are delay sensitive and have more and more demands on the computing ability of the network, imposing a real challenge in the resource allocation of both communication and computing resources \cite{primary,peng}.

Motivated by the above analysis, we discuss a novel issue of relay selection in cooperative systems due to the introduction of computing in this paper. Considering a mobile relay edge systems, where a source $S$ transmits a computation-intensive task to a destination $D$ with the help of $N$ computing-enabled relay nodes, shown in Fig \ref{fig1}. Each relay node can execute the received task from $S$, and then forwards the computed results to $D$. One example of this model is mobile AR delivery in the vehicle network, where the road information, e.g., the road congestion waiting for reasons and expected waiting time collected by the front vehicle, can be computed and transmitted to the following vehicle by the relay node and displayed on the following vehicle in AR mode.
%
\begin{figure}[t]
\begin{center}
 \includegraphics[width=2.8in]{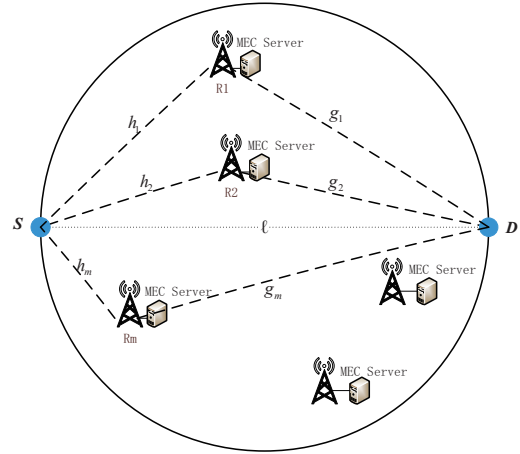}
\end{center}
 \caption{Mobile Relay Edge Computing Systems.}
\label{fig1}
\end{figure}
This mobile edge computing (MEC) architecture is different from the classical MEC model\cite{survey}. In the classical MEC, each user offloads its own computation task to the MEC server in the base station (BS), and the output of the computation task is then transmitted to the user by the BS. Most of these works about this traditional model consider the cost of offloading data in the uplink while ignoring the cost of the transmission in the downlink\cite{mecjsac,twcyou}. It is worth noting that a basic three-node MEC system, consisting of a user node, a helper node, and an AP node attached with a MEC server, is proposed in \cite{meciot}. The main contribution of \cite{meciot} is to optimize the task partition of the user to minimize the total energy consumption at both the user and the helper.

One of the key challenges in cooperative systems is relay selection. A variety of relay selection schemes for amplify-and-forward (AF) or decode-and-forward (DF) have been proposed \cite{Krikidis2015} to cooperatively forward data to the destination. Such communication cooperative relaying schemes only consider the communication resource. However, for our considered system, the relay node not only needs to forward the data but also needs to compute the task. As a result, the available computing ability of the relay node plays an important role in the relay selection policy. In this paper, we propose a latency-best relay selection (LBRS) scheme for the mobile relay edge computing system that not only considers the communication capability but also considers the computing ability.

Furthermore, the traditional outage probability is usually used to analyze the PHY layer behavior of the cooperative system, but cannot qualify the impact of the computing ability on the relay selection. To address this problem, we define a delay outage probability to evaluate the impact of the relay computing ability on the link communications. Accordingly, we derive the expressions of outage probability for the proposed LBRS scheme as well as conventional communication-only relay selection (CORS) and computing-only relay selection (CPORS) schemes. We show that the proposed LBRS scheme reduces to the conventional CORS scheme under the high signal-to-noise ratio (SNR) region. Additionally, we characterize the diversity orders of both the LBRS and CORS schemes as well as the CPORS scheme. s. It is demonstrated that the LBRS and CORS scheme not only have the same diversity order, but both are not always can achieve the full diversity order, depending on the computing ability of relay nodes. Finally, simulation results are done to validate the accuracy of our theoretical analysis, and show that the proposed LBRS scheme has the best performance compared with the CORS and CPORS schemes.

\section{System Model}

As shown in Fig \ref{fig1}, we assume there is no direct link between $S$ and $D$ and each node is equipped with one antenna. Let $h_i$ be the channel fading coefficient from $S$ to the $i$-th relay nodes $R_{i}$, and $g_i$ be the channel fading coefficient from $R_{i}$ to $D$, where we have $h_i \thicksim \mathcal{CN}(0,1)$ and $g_i \thicksim \mathcal{CN}(0,1)$ for $i=1,...,N$. Besides, let $d_{si}$ and $d_{di}$ be the distance between $S$ and $R_i$ and between $R_i$ and $D$, respectively. Thus we consider the path-loss channel coefficient is $1/(1+d_{si}^\alpha)$ and $1/(1+d_{di}^\alpha)$ for the $S$-$R_i$ link and $R_i$-$D$ link \cite{Krikidis2015}, respectively. Here, $\alpha > 2$ denotes the path-loss exponent.

The source $S$ has a computing task characterize by tuple $(L,K,\rho)$, where $L$ is the input size of the task (in \emph{bits}), $K$ is the number of required CPU cycles per bit (in \emph{cycles/bit}), and the output data size of the task divided by the input size $L$ is defined as the data computing ratio $\rho\geq 0$. Each relay node $R_i$ has a computing server, which can run at a constant CPU-cycle frequency $f_i$ (in \emph{cycles/s}).

Firstly, the source $S$ transmit the task to the relay nodes, and the received signal at the $i$-th relay node $R_i$ is
\begin{equation}
\setcounter{equation}{1}
y_i=\sqrt{P_s}\frac{h_i}{\sqrt{1+d_{si}^\alpha}}x_1+n_i.
\end{equation}
where $P_s$ is the transmit power of $S$, $x_1$ denotes the input data signal of the task with normalized power and $n_i$ is additive white Gaussian noise (AWGN) with variance $\sigma^2$. If the $i$-the relay node $R_i$ is active, the achievable transmission rate is
\begin{align}
\nu_{i}^1 =W \log_2(1+\gamma_{i}^1) = W \log_2 (1+ \frac{P_s\mid h_i \mid ^2}{(1+d_{si}^\alpha)\sigma^2 }).
\end{align}
Here, let $W$ be the bandwidth of each link in this paper. Therefore, the transmission time can be given by
\begin{align}
t_{i}^1 = \frac{L}{\nu_{i}^1} = \frac{L}{W \log_2 (1+ \frac{P_s\mid h_i \mid ^2}{(1+d_{si}^\alpha)\sigma^2})}.
\end{align}

In this paper, each relay node operates in the DF mode. The $i$-th relay node $R_i$ decodes $x_1$ from $y_i$ and then compute it by using edge computing server. The computation latency is then given by $t_{i}^c = \frac{LK}{f_i}$.

The output data signal $x_2$ is then transmitted to the destination $D$, and the received signal at $D$ is
\begin{equation}
y_d=\sqrt{P_{r}}\frac{g_i}{\sqrt{1+d_{ri}^\alpha}}x_2+n_d,
\end{equation}
where each relay node has the same transmit power $P_{r}$ and $n_d$ is additive white Gaussian noise (AWGN) with variance $\sigma^2$. Similarly, the achievable transmission rate and the transmission time are given by
\begin{align}
&\nu_{i}^2 =W \log_2(1+\gamma_{i}^2) = W \log_2 (1+ \frac{P_{r}\mid g_i \mid ^2}{(1+d_{ri}^\alpha)\sigma^2}).\\
&t_{i}^2 = \frac{\rho L}{\nu_{i}^2} = \frac{\rho L}{W \log_2 (1+ \frac{P_{ri}\mid g_i \mid ^2}{(1+d_{ri}^\alpha)\sigma^2})}.
\end{align}
respectively. As a result, the source-to-destination delay for completing the service with the aid of $R_i$ is $t_i = t_{i}^1+t_{i}^c+t_{i}^2$.
\begin{figure*}[b]
\vspace{2mm}
\hrulefill
\setcounter{equation}{16}
\begin{align}
&f_z(z)=\int_{0}^{\infty}f_x(x)f_y(z-x)dx \nonumber \\
&=\underbrace{\frac{(\ln^22) \rho (1+d_{si}^\alpha)\sigma^2(1+d_{ri}^\alpha)\sigma^2}{P_sP_{r}}e^{\frac{(1+d_{si}^\alpha)\sigma^2}{P_s}+\frac{(1+d_{ri}^\alpha)\sigma^2}{P_{r}}}}_{\triangleq O_i}\underbrace{\int^{z}_{0} e^{-(\frac{(1+d_{si}^\alpha)\sigma^2}{P_s}2^{\frac{\rho}{x}}+\frac{(1+d_{ri}^\alpha)\sigma^2}{P_{r}}2^{\frac{1}{z-x}})}\frac{\rho}{x^2(z-x)^2}2^{\frac{\rho}{x}+\frac{1}{z-x}}dx}_{\triangleq Q_i}.\label{equ:Zpdf}
\end{align}
\end{figure*}

\section{Relay Selection Scheme and Delay Outage Performance}

In this section, we present three relay selection schemes and discuss the delay outage probability performance, by taking into consideration the computing ability at the relay node. The delay outage probability is defined as the probability that the source-to-destination latency exceeds the maximum delay-bound $D_{\max}$, which can be express as $\Pr\{Delay\geq D_{\max}\}$. Note that we also assume in this paper that the source and destination have perfect channel state information (CSI) knowledge of all links.

\subsection{Communication-only relay selection}

The communication-only relay selection (CORS) scheme only considers the communication rate and selects the relay node whose has the maximum of the transmission rate, known as the best-relay selection scheme in cooperative communications systems. The collection $\theta=\{i|i=1,\cdots,N\}$ is defined, and the relay node selected by the CORS policy is
\begin{align}\label{equ:bix1}
\setcounter{equation}{6}
i^* = \arg_{i\in\theta}\max\limits_{i=1...N}\{\min(\nu_{i}^1,\nu_{i}^2)\}.
\end{align}

The delay outage probability of the CORS scheme can be described as\footnote{Because the relay node needs to compute the task, it does not need the same the transmission rate for the source-relay link and the relay-destination link, even if the relay node operates in the DF mode.}.
\begin{align}
P_{out}^{CORS}&=\Pr\{\frac{L}{\nu_{i^*}^1}+\frac{LK}{f_{i^*}}+\frac{\rho L}{\nu_{i^*}^2}>D_{\max}\}\nonumber\\
&\leq \Pr\{\frac{L+\rho L}{\nu}> D_{\max}-\frac{LK}{f_{i^*}}\}\label{equ:BC define},
\end{align}
where we use $\nu =\min(\nu_{i^*}^1,\nu_{i^*}^2)$, and it is easy to see that if $\nu_{i^*}^1=\nu_{i^*}^2$ or $P_{s}\to \infty$ or $P_{r}\to \infty$, the equality in (\ref{equ:BC define}) clearly holds. Based on (\ref{equ:bix1}), the exact expression of delay outage probability of the CORS scheme is very difficult to obtain in (\ref{equ:BC define}). Thus we derive the upper bound of the delay outage probability for the CORS scheme based on (\ref{equ:BC define}).

Define $\varphi_i=D_{\max}-\frac{LK}{f_i}$ and $\phi=\{i|\varphi_i>0,i=1\cdots,N\}$. It is easy to see that when $\varphi_{i^*}<0$, the outage probability is 1. As a result, if $i^*\notin\phi$, we have $P_{out}^{CORS}=1$. If $i^*\in\phi$, we obtain

\begin{align}
&\Pr\{\nu <\frac{L+\rho L}{\varphi_{i^*}}\}=\Pr\{\max\limits_{i\in\theta,\varphi_i>0}\{\min(\nu_{i}^1,\nu_{i}^2)\}<\frac{L+\rho L}{\varphi_i}\} \nonumber\\
&=\prod\limits_{i=1,i\in\phi}^N[1-\Pr\{\nu_{i}^1>\frac{L+\rho L}{\varphi_i}\}\Pr\{\nu_{i}^2>\frac{L+\rho L}{\varphi_i}\}]\label{equ:rate}.
\end{align}

For $h_i \thicksim \mathcal{CN}(0,1)$, $g_i \thicksim \mathcal{CN}(0,1)$, $\mid h_i\mid ^2$ and $\mid g_i \mid ^2$ are the exponential distribution with parameter $\lambda = 1$, respectively. We have
\begin{equation}
\Pr\{\nu_{i}^1>\frac{L+\rho L}{\varphi_i}\}=\exp\{-\frac{(1+d^\alpha_{si})\sigma^2 (2^{\frac{L+\rho L}{W\varphi_i}}-1)}{P_{s}}\}.\label{equ:erate}
\end{equation}
Substituting (\ref{equ:erate}) into (\ref{equ:rate}), the upper bound is
\begin{align}
P_{out}^{CORS}\leq &\prod\limits_{i=1,i\in\phi}^{N} \left\{1-\exp\{-\frac{(1+d^\alpha_{si})\sigma^2 (2^{\frac{L+\rho L}{W\varphi_i}}-1)}{P_{s}}\}\right. \nonumber\\
&\times \left.\exp\{-\frac{(1+d^\alpha_{ri})\sigma^2 (2^{\frac{L+\rho L}{W\varphi_i}}-1)}{P_{r}}\}\right\}\label{equ:maxmin_r}.
\end{align}
\begin{remark}
Condsidering the special case with $f_i\to [\frac{LK}{D_{\max}}]^+$, $\forall i \in\theta$, i.e., $\varphi_i \to 0$, the outage probability of CORS scheme is given by $\lim_{f_i\to [\frac{LK}{D_{\max}}]^+}P_{out}^{CORS} =1$. If $P_s\to\infty$ as a special condition, $\nu_i^1>>\nu_i^2$, $\forall i \in \theta$ can be derived. The uplink transmission time can be ignored and the outage probability is
\begin{align}
\lim_{P_s\to\infty}P_{out}^{CORS} \!\!=\!\!\!\prod\limits_{i=1,i\in\phi}^N\!\!\![1-\exp\{-\frac{(1+d^\alpha_{ri})\sigma^2 (2^{\frac{\rho L}{W\varphi_i}}-1)}{P_{r}}\}],
\end{align}
for $i^*\in\phi$. Similarly, if $P_{r}\to \infty$, the outage probability can be obtained as following
\begin{align}
\lim_{P_{r}\to\infty}P_{out}^{CORS}\!\!=\!\!\prod\limits_{i=1,i\in\phi}^N[1-\exp\{-\frac{(1+d^\alpha_{si})\sigma^2 (2^{\frac{L}{W\varphi_i}}-1)}{P_{s}}\}],
\end{align}
\end{remark}
for $i^*\in\phi$.

For the outage probability performance, we can see that the CORS scheme (traditional best-relay selection) does not consider the computing ability in the relay node, so that suffers from an outage pulse, which depends on the computing ability $f_{i^*}$ of the selected relay node.

\vspace{-4mm}
\subsection{Computing-only relay selection}

In contrary to the CORS policy, the computing-only relay selection (CPORS) only consider the computing ability of the relay nodes and selects the relay node with the maximum of computing ability. Specifically, the selection criterion of the CPORS policy is given by $i^* = \arg_{i\in\theta}\max_{i=1...N}\{f_i\}$. Let $f_{i^*} = \max\limits_{i=1}^{N}\{f_i\}$.  The delay outage probability is given by
\begin{align}
&P_{out}^{CPORS}=\Pr\{Y+X\geq \frac{W\varphi_{i^*}}{L}\}, \label{equ:pr1}
\end{align}
where $Y=\frac{1}{\log_2 (1+ \frac{P_s\mid h_i \mid ^2}{(1+d_{si}^\alpha)\sigma^2})}$ and $X=\frac{\rho }{ \log_2 (1+ \frac{P_{r}\mid g_i \mid ^2}{(1+d_{ri}^\alpha)\sigma^2})}$, then we have cumulative distribution function (cdf) of $Y$ and $X$
\begin{align}
F(Y)&=\Pr\{\frac{1}{\log_2 (1+ \frac{P_s\mid h_i \mid ^2}{(1+d_{si}^\alpha)\sigma^2})}\leq y\}=e^{-\frac{(1+d_{si}^\alpha)\sigma^2}{P_s}(2^{\frac{1}{y}}-1)},\nonumber\\
F(X)&=e^{-\frac{(1+d_{ri}^\alpha)\sigma^2}{P_{r}}(2^{\frac{1}{x}}-1)}. \label{pdf_xy}
\end{align}
respectively. Based on (\ref{pdf_xy}), the probability density function (pdf) of $Y$ and $X$ can be written as
\begin{align}
f_y(y)&=\frac{dF(Y)}{dy}=\frac{(1+d_{si}^\alpha)\sigma^2\ln 2}{P_s y^2}2^{\frac{1}{y}}e^{-\frac{(1+d_{si}^\alpha)\sigma^2}{P_s}(2^{\frac{1}{y}}-1)}.\nonumber\\
f_x(x)&=\frac{(1+d_{ri}^\alpha)\sigma^2 \rho \ln 2}{P_{r} x^2}2^{\frac{\rho}{x}}e^{-\frac{(1+d_{ri}^\alpha)\sigma^2}{P_{r}}(2^{\frac{\rho}{x}}-1)}.
\end{align}

Let $Z=X+Y$. Based on the convolution formula, the pdf of $Z$ is given by (\ref{equ:Zpdf}), as shown at the bottom of the page. Substituting (\ref{equ:Zpdf}) into (\ref{equ:pr1}), the outage probability can be expressed as
\begin{align}
\setcounter{equation}{17}
P_{out}^{CPORS}\!\!=\!\Pr\{X+Y\geq\frac{W\varphi_{i^*}}{L}\}\!\!=\!\!\left\{\begin{array}{ll}
O_{i^*}\int^\infty_{\frac{W\varphi_{i^*}}{L}}Q_{i^*} \!\!&\!\!i^*\in\phi\\
1\!\!&\!\!i^*\notin\phi.
\end{array}\right.\label{equ:maxf}
\end{align}

\begin{remark}
For the special case with $P_s \to \infty$, the outage probability of the CPORS scheme is given by
\begin{align}
&\lim_{P_s\to\infty}P_{out}^{CPORS} = Pr\{\frac{LK}{f_{i^*}}+\frac{\rho L}{\nu_{{i^*}}^2} \geq D_{\max}\}\nonumber\\
&=1-\exp\{-\frac{(1+d_{r{i^*}}^\alpha)\sigma^2}{P_{r}}(2^{\frac{\rho L}{\varphi_{i^*} W}}-1)\}.
\end{align}
Similarly, when $P_{r}\to\infty$, the outage probability is given by
\begin{align}
&\lim_{P_{r}\to\infty}P_{out}^{CPORS}= Pr\{\frac{L}{\nu_{{i^*}}^1}+\frac{LK}{f_{i^*}} \geq D_{\max}\}\nonumber\\
&=1-\exp\{-\frac{(1+d_{s{i^*}}^\alpha)\sigma^2}{P_{s}}(2^{\frac{L}{\varphi_{i^*} W}}-1)\}.
\end{align}
\end{remark}
\vspace{-6mm}
\subsection{Latency-best relay selection}
\vspace{-1mm}
The minimal total latency (LBRS) scheme means we always select the relay node whose the corresponding latency is the minimal one among $N$ relays. According to the definition of the delay outage probability, it is obvious that LBRS is optimal. The relay node selected by the LBRS policy can be expressed as $i^* = \arg_{i\in\theta}\min_{i=1...N}\{t_i\}$.The outage probability can be described as
\begin{align}
&\Pr\{\min\limits_{i=1,i\in \theta}^N {t_i}\geq D_{max}\}=\prod\limits_{i=1, i\in\phi}^{N} Pr\{t_{i}^1+t_{i}^c+t_{i}^2\geq D_{max}\}\nonumber.
\end{align}
Based on (\ref{equ:maxf}), the outage probability of the LBRS policy is
\begin{align}
\prod\limits_{i=1,i\in\phi}^{N} Pr\{t_{i}^1+t_{i}^2\geq \varphi_i\}=\prod\limits_{i=1,i\in\phi}^{N} O_i\int^\infty_{\frac{W\varphi_i}{L}}Q_i.
\end{align}
\begin{remark}
 For the special case with $P_s\to \infty$, the outage probability of the LBRS policy is given by
 \begin{align}
&\lim_{P_s\to\infty}P_{out}^{LBRS} = \prod\limits_{i=1,i\in\phi}^{N} Pr\{t_{i}^c+t_{i}^2\geq D_{max}\}\nonumber\\
&=\prod\limits_{i=1,i\in\phi}^{N}[1-\exp\{-\frac{(1+d_{ri}^\alpha)\sigma^2}{P_{r}}(2^{\frac{\rho L}{\varphi_i W}}-1)\}],\nonumber\\
&=\lim_{P_s\to\infty}P_{out}^{CORS}.
 \end{align}
 Similarly, if $P_{r}\to \infty$, the outage probability is
 \begin{align}
&\lim_{P_{r}\to\infty}P_{out}^{LBRS}= \prod\limits_{i=1,i\in\phi}^{N} Pr\{t_{i}^1+t_{i}^c\geq D_{max}\}\nonumber\\
&=\prod\limits_{i=1,i\in\phi}^{N}[1-\exp\{-\frac{(1+d_{si}^\alpha)\sigma^2}{P_{s}}(2^{\frac{L}{\varphi_i W}}-1)\}],\nonumber\\
&=\lim_{P_{r}\to\infty}P_{out}^{CORS}.
 \end{align}
\end{remark}

Remark 3 shows that for high SNRs, the outage probability of the LBRS policy is equal to that of the CORS policy. The intuition behind
this result is that in the high SNR regime of one link, the optimal relay selection (LBRS) is independent of the computing ability of the relay node (CORS).

\section{Diversity order analysis}
In this section, we analyze the diversity order of the three schemes. Following the \cite{chen2012}, the diversity order is defined as $d=-\lim_{\gamma\to \infty}\frac{\log P_{out}(\gamma)}{\log \gamma}$, where $\gamma$ is an SNR and $P_{out}(\gamma)$ is the delay outage probability function of $\gamma$. In order to computing the diversity order, the distance between relay and source or destination can be infinitely closed to 0 for $\gamma \to \infty$ \cite{chen2012}. Supposed $P_s=P_{r}$, $\beta_i=\frac{(1+d^\alpha_{ri})}{1+d^\alpha_{si}}$,$\gamma = \frac{P_s}{(1+d_{si}^\alpha)\sigma^2}$, $\gamma$ can be seen as a constant independent of distance for $\gamma \to \infty$.
\begin{theorem}
The diversity order of the CORS scheme for the mobile relay edge computing network is
\begin{align}
d_{CORS} = |\phi| \leq N,
\end{align}
where $|\phi|$ is the cardinality of the relay node set $\phi$.
\end{theorem}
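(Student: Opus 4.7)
The plan is to sandwich $P_{out}^{CORS}(\gamma)$ between two quantities that both scale as $\gamma^{-|\phi|}$ as $\gamma\to\infty$; plugging this into the definition $d=-\lim_{\gamma\to\infty}\log P_{out}(\gamma)/\log\gamma$ then gives $d_{CORS}=|\phi|$, while $|\phi|\le N$ is automatic from $\phi\subseteq\theta$.

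\textbf{Upper bound on outage ($d_{CORS}\ge|\phi|$).} I would start from (12) and first substitute $P_s=P_r$, $\beta_i=(1+d_{ri}^\alpha)/(1+d_{si}^\alpha)$, and $\gamma=P_s/[(1+d_{si}^\alpha)\sigma^2]$, treating the distances as constants in the limit (as stated before the theorem). Writing $a_i\triangleq 2^{(L+\rho L)/(W\varphi_i)}-1$, each factor in the product collapses to $1-\exp(-(1+\beta_i)a_i/\gamma)$; relays in $\theta\setminus\phi$ never enter the product because (12) is already restricted to $\phi$. The Taylor expansion $1-e^{-x}=x+O(x^2)$, applied with $x=(1+\beta_i)a_i/\gamma\to 0$, makes each factor $\sim(1+\beta_i)a_i/\gamma$, and multiplying $|\phi|$ of them yields $P_{out}^{CORS}\le c_1\gamma^{-|\phi|}+o(\gamma^{-|\phi|})$ for $c_1=\prod_{i\in\phi}(1+\beta_i)a_i$. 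Taking $-\log(\cdot)/\log\gamma$ produces $d_{CORS}\ge|\phi|$.

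\textbf{Matching lower bound ($d_{CORS}\le|\phi|$).} I would exhibit a concrete sample-path event on which every candidate relay, and hence the CORS-selected one, violates the deadline. Choose $\epsilon>0$ with $L/\epsilon>\max_{i\in\phi}\varphi_i$ and let $A_\epsilon=\bigcap_{i\in\phi}\{\nu_i^1<\epsilon\}$. On $A_\epsilon$, every $i\in\phi$ has $t_i^1=L/\nu_i^1>\varphi_i$ from the uplink time alone, while every $i\in\theta\setminus\phi$ already violates the deadline through computation time alone; consequently, whatever $i^*$ the CORS rule picks is in outage. Since the $h_i$'s are i.i.d.\ with $|h_i|^2$ exponential, $\Pr(\nu_i^1<\epsilon)=1-e^{-(2^{\epsilon/W}-1)/\gamma}\sim(2^{\epsilon/W}-1)/\gamma$, and independence across the $|\phi|$ uplink channels gives $P_{out}^{CORS}\ge\Pr(A_\epsilon)\sim c_2\gamma^{-|\phi|}$.

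The main obstacle is the lower bound, because the CORS rule selects on max-min rate rather than on delay, so the outage event is not literally ``every relay fails'' and one cannot simply multiply marginal failure probabilities of the selected branch. The device above bypasses this by producing a joint event strong enough to force \emph{any} selection rule into outage, yet tractable because it factors over $|\phi|$ independent Rayleigh channels each of marginal order $1/\gamma$. Combining the two halves gives $d_{CORS}=|\phi|$, and $|\phi|\le N$ follows from $\phi\subseteq\theta$.
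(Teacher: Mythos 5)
Your proposal is correct, and its first half is essentially the paper's entire proof: the paper also starts from the upper bound (12), substitutes $P_s=P_r$, $\beta_i$ and $\gamma$, applies the first-order Taylor expansion $e^{-x}\approx 1-x$ to each factor, and reads off the exponent $|\phi|$ from the resulting product $\frac{1}{\gamma^{|\phi|}}\prod_{i\in\phi}(2^{\frac{L+\rho L}{W\varphi_i}}-1)\frac{2+d_{si}^\alpha+d_{ri}^\alpha}{1+d_{si}^\alpha}$. Where you genuinely diverge is that the paper stops there: it silently promotes the upper bound on $P_{out}^{CORS}$ to an equality, so strictly speaking it only establishes $d_{CORS}\geq|\phi|$. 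Your second half---the joint deep-fade event $A_\epsilon=\bigcap_{i\in\phi}\{\nu_i^1<\epsilon\}$ with $\epsilon<L/\max_{i\in\phi}\varphi_i$, which forces every relay (those in $\phi$ via the uplink time alone, those outside $\phi$ via computation alone) into outage regardless of the max-min selection, and which has probability $\sim c_2\gamma^{-|\phi|}$ by independence of the $|h_i|^2$---is exactly the converse the paper omits, and it is the right device precisely for the reason you identify: the CORS outage event does not factor over relays, so one cannot just multiply marginal failure probabilities. The net effect is that your argument proves the stated equality $d_{CORS}=|\phi|$ rigorously, whereas the paper's proof, read literally, only bounds the diversity order from below. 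Both arguments share the same informal treatment of $\gamma$ as distance-independent in the limit, which the paper flags before the theorem, so no additional gap is introduced there.
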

\begin{proof}
For the CORS scheme, (\ref{equ:maxmin_r}) can be rewritten as
 \begin{align}
P_{out}^{CORS}\!\!\!\leq \!\!\!\!&\!\!\prod\limits_{i=1,i\in\phi}^{N}\!\!\! \left\{\!1\!-\!\exp\{\frac{1-2^{\frac{L+\rho L}{W\varphi_i}}}{\gamma}\}\right.\left.\exp\{\frac{1-2^{\frac{L+\rho L}{W\varphi_i}}}{\gamma}\beta_i\}\right\}. \label{diversity}
\end{align}
Using the Taylor expansion for $\gamma\to \infty$
\begin{align}
\lim_{\gamma\to\infty} \exp\{-\frac{2^{\frac{L+\rho L}{W\varphi_i}}-1}{\gamma}\}=1-\frac{2^{\frac{L+\rho L}{W\varphi_i}}-1}{\gamma} \label{Taylor1},
\end{align}
\begin{align}
\lim_{\gamma\to\infty} \exp\{-\frac{(2^{\frac{L+\rho L}{W\varphi_i}}-1)}{\gamma}\frac{(1+d^\alpha_{ri})}{1+d^\alpha_{si}}\} = 1-\frac{(2^{\frac{L+\rho L}{W\varphi_i}}-1)}{\gamma}\beta_i \nonumber.
\end{align}

Therefore, we have
\begin{align}
P_{out}^{CORS}= &\frac{1}{\gamma^{|\phi|}}\prod\limits_{i=1,i\in\phi}^{N}\left\{(2^{\frac{L+\rho L}{W\varphi_i}}-1)\frac{(2+d^\alpha_{si}+d^\alpha_{ri})}{1+d^\alpha_{si}}\right\}, \nonumber
\end{align}
for $\gamma\to\infty$. Thus, we have the theorem.
\end{proof}

Theorem 1 implies that the diversity order of the CORS scheme is less than or equal to N, and it is dependent on the computing ability of relay nodes.

From \cite{DenizG2008} the diversity order of multi-hop relay channel can be characeterized by $d= \min\{d_{S,R},d_{R,D}\}$.

If $\varphi_{i^*}\leq0$, the diversity order of the CPORS scheme is 0, and when $\varphi_{i^*}>0$, the outage probability of the source-relay link and the relay-destination link are given by
\begin{align}
&\Pr\{t_{i^*}^1+t_{i^*}^c\}=(2^{\frac{L}{\varphi_{i^*} W}}-1)\frac{1}{\gamma},\\
&\Pr\{t_{i^*}^c+t_{i^*}^2\}=(2^{\frac{L}{\varphi_{i^*} W}}-1)\beta_i\frac{1}{\gamma},
\end{align}
respectively, for $\gamma\to\infty$. We thus have $d_{CPORS}=1$, when $\varphi_{i^*}>0$. As a result, we have the following proposition for the CPORS policy.
\begin{proposition}
The diversity order of the CPORS scheme for the mobile relay edge computing network is
\begin{align}
d_{CPORS} =\left\{\begin{array}{ll}
1, &\text{if}~ \varphi_{i^*}>0\\
0,&\text{if}~  \varphi_{i^*}\leq0.
\end{array}\right.
\end{align}
\end{proposition}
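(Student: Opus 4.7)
The plan is to split on the sign of $\varphi_{i^*}$ exactly as in the statement and invoke the multi-hop diversity rule $d=\min\{d_{S,R},d_{R,D}\}$ from \cite{DenizG2008} already cited just above. The key observation upfront is that the CPORS selection $i^*=\arg\max_{i} f_i$ is a function of the deterministic CPU frequencies alone and is therefore independent of every $h_i$ and $g_i$; conditional on $i^*$, the two wireless links still carry only a single $\mathcal{CN}(0,1)$ pair, so the scheme harvests no selection diversity from the channels at all.

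For the degenerate branch $\varphi_{i^*}\le 0$, the deterministic computation delay $LK/f_{i^*}$ already meets or exceeds $D_{\max}$, so the outage event in \eqref{equ:maxf} has probability $1$ for every channel realization and every $\gamma$. Substituting $P_{out}(\gamma)\equiv 1$ into $d=-\lim_{\gamma\to\infty}\log P_{out}(\gamma)/\log\gamma$ immediately gives $d_{CPORS}=0$.

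For the nontrivial branch $\varphi_{i^*}>0$, I would reuse the Taylor-expansion device from the proof of Theorem~1, with $P_s=P_r$ and $\gamma=P_s/((1+d_{si^*}^\alpha)\sigma^2)$. After absorbing the deterministic $t^c_{i^*}$ into the shrunken deadline $\varphi_{i^*}$, the source-relay and relay-destination per-hop outages are exponential tails of $|h_{i^*}|^2$ and $|g_{i^*}|^2$ at thresholds of the form $(2^{L/(W\varphi_{i^*})}-1)/\gamma$ and $(2^{\rho L/(W\varphi_{i^*})}-1)\beta_{i^*}/\gamma$; a first-order expansion shows each is $\Theta(1/\gamma)$, so $d_{S,R}=d_{R,D}=1$ and the min-rule yields $d_{CPORS}=1$.

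The main technical obstacle is that the diversity order we actually want is that of the full three-term event $\{t^1_{i^*}+t^c_{i^*}+t^2_{i^*}\geq D_{\max}\}$, whose exact probability involves the unwieldy convolution in \eqref{equ:Zpdf}. I would sidestep that integral by sandwiching: the union $\{t^1_{i^*}+t^c_{i^*}\geq D_{\max}\}\cup\{t^c_{i^*}+t^2_{i^*}\geq D_{\max}\}$ is contained in the full outage event (a lower bound, since $t^1_{i^*},t^2_{i^*}\geq 0$), while the full outage event is contained in $\{t^1_{i^*}\geq \varphi_{i^*}/2\}\cup\{t^2_{i^*}\geq \varphi_{i^*}/2\}$ (an upper bound obtained by splitting the residual deadline in half). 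Both bounds are $\Theta(1/\gamma)$, which pins $d_{CPORS}=1$ without ever evaluating $f_Z$ explicitly.
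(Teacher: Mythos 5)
Your proposal is correct and its main line coincides with the paper's: the paper likewise splits on the sign of $\varphi_{i^*}$, takes $P_{out}\equiv 1$ (hence $d=0$) when $\varphi_{i^*}\le 0$, and for $\varphi_{i^*}>0$ invokes the multi-hop rule $d=\min\{d_{S,R},d_{R,D}\}$ of \cite{DenizG2008} together with a first-order expansion showing that each per-hop tail, with the deterministic $t^c_{i^*}$ absorbed into the shrunken deadline $\varphi_{i^*}$, decays as $\Theta(1/\gamma)$. Where you go beyond the paper is the final sandwiching step. The paper asserts the min-rule and never connects the two per-hop events back to the actual three-term outage event $\{t^1_{i^*}+t^c_{i^*}+t^2_{i^*}\ge D_{\max}\}$, whose exact probability is the convolution in \eqref{equ:Zpdf}; your inclusions
\begin{equation*}
\{t^1_{i^*}+t^c_{i^*}\ge D_{\max}\}\cup\{t^c_{i^*}+t^2_{i^*}\ge D_{\max}\}\subseteq\{t^1_{i^*}+t^c_{i^*}+t^2_{i^*}\ge D_{\max}\}\subseteq\{t^1_{i^*}\ge \tfrac{\varphi_{i^*}}{2}\}\cup\{t^2_{i^*}\ge \tfrac{\varphi_{i^*}}{2}\}
\end{equation*}
give matching $\Theta(1/\gamma)$ lower and upper bounds and thus pin the exponent at $1$ without evaluating $f_Z$ or appealing to \cite{DenizG2008} as a black box. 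Your observation that the CPORS index $i^*$ is a deterministic function of the $f_i$ alone, hence independent of the fading, is also the right way to see why no selection diversity accrues; the paper leaves this implicit. Two minor points of comparison: the paper's displayed per-hop quantities are written as $\Pr\{t^c_{i^*}+t^2_{i^*}\}$ with the event ``$\ge D_{\max}$'' omitted, and its relay-destination threshold uses exponent $L$ where $\rho L$ is expected; your version places the $\rho$ correctly.
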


Obviously, when the computing ability of the relay nodes are different, there is only one option: the maximum computing ability relay node and the computing ability determines the diversity order.

In the LBRS scheme, the outage probability of the source-relay link and the outage probability of the relay-destination link are given by
\begin{align}
&\prod\limits_{i=1,i\in\phi}^{N} Pr\{t_{i}^1+t_{i}^c\geq D_{max}\}=\frac{1}{\gamma^{|\phi|}}\prod\limits_{i=1,i\in\phi}^{N}2^{\frac{L}{\varphi_i W}}-1,\nonumber\\
&\prod\limits_{i=1,i\in\phi}^{N} Pr\{t_{i}^c+t_{i}^2\geq D_{max}\}=\frac{1}{\gamma^{|\phi|}}\prod\limits_{i=1,i\in\phi}^{N} (2^{\frac{L+\rho L}{W\varphi_i}}-1)\beta_i,\nonumber
\end{align}
respectively, for $\gamma \to \infty$. Accordingly, the diversity order of the LBRS scheme is $d_{LBRS}=|\phi|\leq N$, which is same with $d_{CORS}$. This suggests that the optimal relay selection scheme (CORS) also can not always achieve the full diversity gain $N$ for traditional relay nodes without computing.
\begin{figure}

\begin{center}
\includegraphics[width=3.2in]{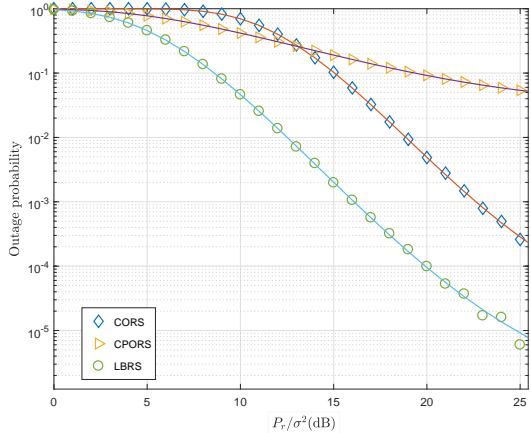}
\caption{Outage probability versus $P_{r}/\sigma^2$. Here, the solid lines represent the theoretical results.}
\label{fig2}
\end{center}

\end{figure}

\begin{figure}
\begin{center}
\includegraphics[width=3.2in]{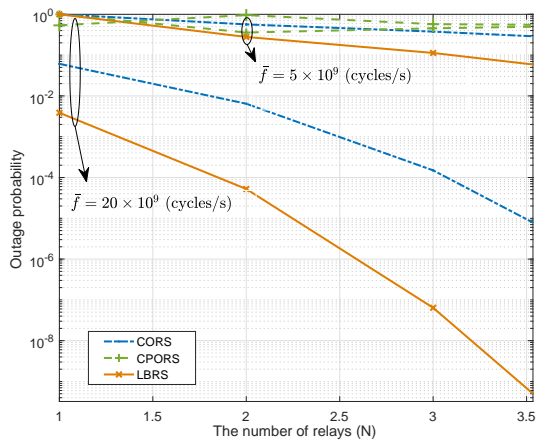}
\caption{Outage probability versus the number of relays. Here, we use $P_{r}/\sigma^2$=20 dB.}
\label{fig3}
\end{center}
\vspace{-3mm}
\end{figure}

\section{Simulation Results}
In this section, we present simulation results to evaluate the performance of the three schemes. Unless specified, the following parameters are used throughout this section: $\ell=1$, $L = 50\times 10^6$ bits, $K=10$ cycles/bits, $\rho=0.5$, $P_s/\sigma^2=25$ dB, $W=100\times 10^6$ Hz, $N=4$, $D_{\max}=0.2$ s and $\alpha =3$. Without loss of generality, we set the CPU speed $f$ of relays is the uniform distribution between $f \sim (2\times2.5\times10^{9},12\times2.5\times 10^{9})$ (cycles/s) \cite{VR}.

Fig.\ref{fig2} shows the outage probability of the proposed relay schemes versus $P_{r}/\sigma^2$. We also plot simulation results to validate the accuracy of our theoretical analysis in Fig.\ref{fig2}. We can see that the theoretical results (solid lines) are very close to simulated results for different $P_{r}/\sigma^2$ values. It can be also seen from the figure that the LBRS scheme has the best performance in all $P_{r}/\sigma^2$ values, when the outage probability using the CORS scheme is slightly higher than that of the CPORS scheme in the low $P_{r}/\sigma^2$ region ($\leq 13$ dB) but then is significant lower in the high $P_{r}/\sigma^2$ region. This is because that the CPORS scheme takes into account the computing power of the relay to compensate for the delay loss caused by a certain low transmission rate.

Fig.\ref{fig3} depicts the outage probability of the three schemes versus the number of relays with different computing ability. Taking the mean of $5\times 10^9$ (cycles/s) and $20\times 10^9$ (cycles/s) as examples relays with high average CPU-cycle frequency, the overall outage probability is less than that of relays with a low average CPU-cycle frequency. As the number of relays increases, the outage probability decreases, and the relays with high average CPU-cycle frequency has a faster drop rate.

\section{Conclusion}
In this paper, we have proposed a new relay selection scheme LBRS for mobile relay edge computing system, consisting of a source, a destination, and multiple relays which have computing servers to cooperative computing. In order to analyze the cooperative communications with computing, a delay outage probability is defined and then we have investigated the performance for the proposed LBRS scheme with two traditional relay selection schemes in terms of the delay outage probability and the diversity order. The key results of this work can be summarized as follows: $i$) The computing ability of relay nodes plays an important role in the relay selection policy; $ii$) The proposed LBRS scheme has a good performance in terms of the delay outage probability and the diversity order; $iii$) The proposed LBRS scheme reduces to the traditional CORS scheme for the high SNR region.

\bibliographystyle{IEEEtran}{}
\bibliography{VOD_paper}

\end{document}